\newtheorem*{example*}{Example}
\newtheorem*{definition*}{Definition}
\newtheorem*{theorem*}{Theorem}
\newtheorem*{lemma*}{Lemma}
\let\olddefinition\algorithm
\renewcommand{\algorithm}{\olddefinition\normalfont}
\newcommand{\PPB}{\mathcal{P}(\mathcal{B})}
\newcommand{\PPBbar}{\overline{\mathcal{P}}(\mathcal{B})}
\newcommand{\PPBo}{\accentset{\circ}{\mathcal{P}}(\mathcal{B})}
\newcommand{\CB}{\mathcal{C}(\mathcal{B})}
\newcommand{\Z}{\mathbb{Z}}
\newcommand{\R}{\mathbb{R}}
\newcommand{\F}{\mathbb{F}}
\newcommand{\hx}{\hat{x}}
\newcommand{\hz}{\hat{z}}
\DeclareMathOperator*{\vol}{Vol}
\title{Neural Lattice Decoders}
\name{Vincent Corlay$^{\dagger,*}$, Joseph J. Boutros$^{\ddagger}$, Philippe Ciblat$^{\dagger}$, and Lo\"ic Brunel$^*$}
\address{$^{\dagger}$ Telecom ParisTech, 46 Rue Barrault, 75013 Paris, v.corlay@fr.merce.mee.com\\
$^{\ddagger}$ Texas A\&M University, Doha, Qatar, $^*$Mitsubishi Electric R\&D, Rennes, France.}
\begin{document}

\maketitle

\begin{abstract}
Lattice decoders constructed with neural networks are presented. Firstly, we show how the fundamental parallelotope
is used as a compact set for the approximation by a neural lattice decoder. Secondly, we introduce
the notion of Voronoi-reduced lattice basis. As a consequence, a first optimal neural lattice decoder is built
from Boolean equations and the facets of the Voronoi cell. This decoder needs no learning.
Finally, we present two neural decoders with learning. It is shown that L1 regularization and {\em a priori} information
about the lattice structure lead to a simplification of the model.
\end{abstract}

\begin{keywords}
Closest Vector Problem, Neural Network, Machine Learning, Lattice Reduction.
\end{keywords}

\section{Neural Decoding via a Compact Set \label{sec_compact}}
We restrict this paper to point lattices in the $n$-dimensional real space $\R^n$,
also called Euclidean lattices.
A lattice $\Lambda$ is a free $\Z$-module in $\R^n$, or simply a discrete additive subgroup of $\R^n$.
To generate such an infinite discrete set with an additive group structure,
$\Lambda$ requires a basis formed by linearly independent vectors.
For a rank-$n$ lattice in $R^n$, the rows of a $n\times n$ generator matrix $G$ constitute
the basis of $\Lambda$ and any lattice point $x$ is obtained via $x=zG$, where $z \in \Z^n$.
For a given basis $\mathcal{B}=\{ g_i \}_{i=1}^n$ forming the rows of $G$,
the fundamental parallelotope of $\Lambda$ is defined by
\begin{equation}
\label{equ_PB}
\mathcal{P}(\mathcal{B}) = \{ y \in \R^n : y=\sum_{i=1}^{n}\alpha_{i}g_{i}, \ 0 \leq \alpha_{i} < 1  \}.
\end{equation}
For $\PPB$ in (\ref{equ_PB}), we also define its closure denoted by $\PPBbar$.
The fundamental volume of $\Lambda$ is $\det(\Lambda)=|\det(G)|=\vol(\mathcal{V}(x))=\vol(\mathcal{P}(\mathcal{B}))$.
The Voronoi cell of $x$ is:
\begin{equation}
\mathcal{V}(x)=\{ y \in \R^n : \|y-x\| \le \|y-x'\|, \forall x' \in \Lambda \}.
\end{equation}
A vector $v \in \Lambda$ is called Voronoi vector if the half-space $\{y \in \mathbb{R}^{n} \ : \ y.v \leq \frac{1}{2}v.v \}$ has a non empty intersection with $\mathcal{V}(0)$. 
The vector is said relevant if the intersection is an $(n-1)$-dimensional face of $\mathcal{V}(0)$.
The Voronoi cell is thus also defined as the intersect of half-spaces:
\begin{equation}
\mathcal{V}(0)=\{y \in \mathbb{R}^{n} : y.v \leq \frac{1}{2}v.v \},
\end{equation}
where $v$ is relevant. We call the number of relevant Voronoi vector the Voronoi number. 
For root lattices \cite{Conway1999} the Voronoi number is equal to the kissing number.
The first minimum of $\Lambda$, i.e. its minimum Euclidean distance, is given by
\begin{equation}
d_{min}(\Lambda)=\min_{x \ne x'} \| x-x'\|=2\rho,
\end{equation}
for $x,x' \in \Lambda$ and where $\rho$ is the packing radius of the associated lattice sphere packing.

Lattice decoding refers to the method of finding the closest lattice point, the closest in Euclidean distance sense.
This problem is also known as the Closest Vector Problem (CVP).
Its associated decision problem is NP-complete \cite[Chap.~3]{Micciancio2002}. 
Nevertheless, lattice decoding has been extensively studied in the literature for small and large dimensions.
Optimal and quasi-optimal decoders are known for random lattices encountered in communications channels,
typically for $n \le 100$ \cite{Viterbo1999}\cite{Agrell2002}\cite{Brunel2003},
and for binary Construction-A lattices \cite[Chap.~20]{Conway1999}.
Sub-optimal message-passing iterative decoders were very successful for non-binary Construction-A lattices 
in dimensions as high as 1 million \cite{Boutros2015}\cite{diPietro2017}.
The literature also includes extensive work on decoding multi-level coded modulations \cite{Wachsmann1999}
that give rise to lattice (coset codes) and non-lattice constellations,
e.g. see \cite{Vardy1993}\cite{Sadeghi2006}\cite{Sommer2008}\cite{Yan2014}
for Leech and lattices based on low-density parity-check codes and polar codes.\\

\noindent
$\bullet$ {\bf Relation to Prior Work}. This is the first paper describing how artificial neural networks
can be employed to solve the CVP for infinite lattice constellations. We build an optimal neural lattice decoder
based on Voronoi-reduced lattice bases. This first neural lattice decoder needs no learning and is tractable in small dimensions.
We also show two other types of neural lattice decoders obtained from training of unconstrained and constrained feed-forward networks.
As cited above, previous published lattice decoders do not utilize neural networks techniques.
The current literature on machine learning and deep learning includes interesting results with applications
to communication theory and coding theory, e.g. \cite{Nachmani2016}\cite{Samuel2017}\cite{Kim2018}.
These results motivated us to develop neural lattice decoders.
After we submitted the first version of this manuscript, two other papers
were also published on machine learning for lattice decoding \cite{Mohammadkarimi2018} \cite{Sadeghi2018}.\\

Neural network classifiers are trained to take the best decision about the value of a variable that belongs to a finite set,
most frequently the binary set $\F_2$ \cite{Goodfellow2016}.
Other learning models are trained to produce a good estimate for a real number
characterizing one variable, e.g. the probability of a given event involving that variable.
In other words, to our modest knowledge, it appears that feed-forward networks
do not have the capability of observing the entire space $\R^n$ to infer the value of a variable that belongs to an infinite set
such as $\Z$.
Indeed, the majority of known lattice decoders search for the closest lattice point $\hx=\hz G$ by looking in $\Z^n$ for the best
vector $\hz$, except for low-density lattices where message passing solves directly the coordinates of $\hx$ \cite{Sommer2008}.
In order to help a neural network solve or approximate the CVP, we force the decoder input to satisfy
the assumptions of the Universal Approximation Theorem. This theorem
was proved by G.~Cybenko for sigmoid networks \cite{Cybenko1989}
and then generalized by K.~Hornik to multilayer feed-forward architectures \cite{Hornik1991}.
A version of this theorem can be stated as follows \cite{Anthony1999}:

\begin{theorem*}
{\bf (Anthony \& Bartlett 1999)}. The two-layer sigmoid networks are ``universal approximators", in a sense that,
given any continuous function $f$ defined on some compact subset $\mathcal{S}$ of $\R^n$,
and any desired accuracy $\epsilon$, there is a two-layer sigmoid network computing a function
that is within $\epsilon$ of $f$ at each point of $\mathcal{S}$.
\end{theorem*}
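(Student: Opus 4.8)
\noindent\emph{Proof proposal.} The plan is to follow the functional-analytic route of Cybenko, recasting the statement as a density assertion in the Banach space $C(\mathcal{S})$ of continuous real functions on the compact set $\mathcal{S}$, equipped with the supremum norm. A two-layer sigmoid network computes a function of the form $g(x)=\sum_{j=1}^{N}\alpha_{j}\,\sigma(w_{j}\cdot x+b_{j})$, where $\sigma$ is a sigmoidal activation, i.e. bounded and measurable with $\sigma(t)\to 1$ as $t\to+\infty$ and $\sigma(t)\to 0$ as $t\to-\infty$. The set $\mathcal{N}$ of all such functions is a linear subspace of $C(\mathcal{S})$, so the theorem is equivalent to the claim that $\mathcal{N}$ is dense, $\overline{\mathcal{N}}=C(\mathcal{S})$; unwinding density in the supremum norm then yields exactly the stated guarantee that some finite $g\in\mathcal{N}$ satisfies $\sup_{x\in\mathcal{S}}|g(x)-f(x)|<\epsilon$.

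First I would argue by contradiction: suppose $\overline{\mathcal{N}}$ is a proper closed subspace. By the Hahn--Banach theorem there is a nonzero bounded linear functional $L$ on $C(\mathcal{S})$ that annihilates $\overline{\mathcal{N}}$, and by the Riesz representation theorem $L$ is integration against a nonzero finite signed regular Borel measure $\mu$ on $\mathcal{S}$. Annihilation of $\mathcal{N}$ then reads
\begin{equation*}
\int_{\mathcal{S}}\sigma(w\cdot x+b)\,d\mu(x)=0\quad\text{for all }w\in\R^n,\ b\in\R.
\end{equation*}

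The crux, and the step I expect to be the main obstacle, is to show that a sigmoidal $\sigma$ is \emph{discriminatory}: that the identity above forces $\mu=0$. I would freeze a direction $w$ and a shift $\phi$ and let a scale $\lambda\to\infty$, so that $\sigma(\lambda(w\cdot x+b)+\phi)$ converges pointwise to $1$ on the open half-space $\{w\cdot x+b>0\}$, to $0$ on $\{w\cdot x+b<0\}$, and to $\sigma(\phi)$ on the bounding hyperplane. Since $\mu$ is finite and $\sigma$ is bounded, bounded convergence lets me pass to the limit and obtain $\sigma(\phi)\,\mu(H)+\mu(P^{+})=0$ for the hyperplane $H$ and open half-space $P^{+}$; sending $\phi\to\pm\infty$ separately forces $\mu(P^{+})=0$ for every half-space (and $\mu(H)=0$ for every hyperplane). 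For each fixed $w$ this makes the pushforward $\mu_{w}$ of $\mu$ along $x\mapsto w\cdot x$ vanish on every half-line, hence on every interval, hence $\mu_{w}\equiv 0$ on $\R$. Since $\widehat{\mu}(w)=\widehat{\mu_{w}}(1)$, it follows that $\widehat{\mu}\equiv 0$, and injectivity of the Fourier transform on finite signed measures gives $\mu=0$, contradicting $L\neq 0$.

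With the discriminatory property in hand the contradiction closes: no proper closed subspace contains $\mathcal{N}$, so $\mathcal{N}$ is dense in $C(\mathcal{S})$, which is the theorem. An alternative, more constructive route would first approximate the uniformly continuous $f$ by a step function on a fine partition of $\mathcal{S}$ and then realize approximate half-space indicators by steep sigmoids; I would avoid it here because assembling localized bumps from ridge functions within a \emph{single} hidden layer is delicate, whereas the Hahn--Banach and Riesz argument is self-contained once the discriminatory lemma is established.
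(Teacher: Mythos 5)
The paper does not prove this theorem: it is quoted verbatim (with attribution) from Anthony and Bartlett, and the surrounding text explicitly defers to Cybenko and Hornik for the proof. So there is no in-paper argument to compare yours against; what can be assessed is whether your reconstruction is sound. It is: you have reproduced Cybenko's classical density argument, which is the right route for the statement as given (one hidden layer of sigmoid units, linear output, uniform approximation on a compact set). The Hahn--Banach/Riesz reduction to the discriminatory property is standard, and your proof that a sigmoidal $\sigma$ is discriminatory is correct: each scaled-and-shifted unit $\sigma(\lambda(w\cdot x+b)+\phi)$ is again of the admissible form $\sigma(w'\cdot x+b')$, its pointwise limit as $\lambda\to\infty$ is $\mathbf{1}_{P^+}+\sigma(\phi)\mathbf{1}_{H}$, and dominated convergence against the finite measure $|\mu|$ yields $\sigma(\phi)\mu(H)+\mu(P^+)=0$, whence $\mu$ vanishes on every open half-space and every hyperplane. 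Two small points you should make explicit if you write this up: (i) passing from ``$\mu_w$ vanishes on half-lines and singletons'' to ``$\mu_w\equiv 0$'' requires a $\pi$--$\lambda$ argument (intervals form a $\pi$-system generating the Borel sets, and the null sets of the finite signed measure $\mu_w$ form a $\lambda$-system once you observe $\mu_w(\R)=0$); and (ii) the bounded convergence step should be applied to the Jordan decomposition, since $\mu$ is signed. Neither is a gap, just a detail to record. Your closing remark is also apt: assembling localized bumps from ridge functions within a single hidden layer is genuinely delicate, which is exactly why the duality argument is the standard proof.
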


We focus on the fact that $f$ is defined on a compact subset of $\R^n$.
There are many ways to partition $\R^n$. Two obvious partitions inspired from the lattice structure are
$\R^n=\bigcup_{x \in \Lambda} \mathcal{V}(x)$, where one should be careful in assigning the facets to a single Voronoi cell,
and $\R^n=\bigcup_{x \in \Lambda} (\mathcal{P}(\mathcal{B})+x)$.
Note that $\mathcal{V}(x)=\mathcal{V}(0)+x$. 
The partition based on Voronoi cells cannot be used because, given $y \in \R^n$, solving $\hx$ where $y \in \mathcal{V}(\hx)$
is exactly the CVP that we aim to solve. On the other hand, it is easy to determine the translated parallelotope
$\mathcal{P}(\mathcal{B})+x$ to which $y$ belongs. Hence, the lattice decoder input $y$ is translated by $-x$
to let the neural lattice decoder operate in the compact region $\PPBbar+0$ or equivalently in $\PPB+0$
by assigning half of the facets (the upper facets) to the neighboring parallelotope.

\section{Voronoi-Reduced Lattice Basis \label{sec_voronoi}}
In the sequel, following the conclusion of the previous section,
our neural lattice decoder shall operate as follows within the fundamental parallelotope (Step 2 below):
\begin{itemize}
\itemsep=-1mm
\item Step 0: A noisy lattice point $y_0=x+\eta$ is observed, where  $x \in \Lambda$ and $\eta \in \R^n$ is an additive noise.
\item Step 1: Compute $t=\lfloor y_0G^{-1} \rfloor$ and get $y=y_0-tG$ which now belongs to $\mathcal{P}(\mathcal{B})$.
Note: the floor function applied to a vector corresponds to its application on all its coordinates.
\item Step 2: The neural lattice decoder finds $\hx$, the closest lattice point to $y$.
\item Step 3: The closest point to $y_0$ is $\hx_0=\hx+tG$. 
\end{itemize}

For mod-2 Construction-A lattices, the noisy point $y_0$ can be folded inside the cube $[-1,+1]^n$ to decode the component
error-correcting code and then find the closest lattice point~\cite[Chap.~20]{Conway1999}.
Thus, another option for the compact set to be used by the neural decoder of mod-2 Construction-A lattices
is the cube $[-1,+1]^n$. In this paper, although the HLD decoder described in Section~\ref{secMERDEHLD}
can also operate on $[-1,+1]^n$, we will only consider the compact region $\PPBbar$ for Step~2.

\begin{figure}[!t]
\centering
\includegraphics[scale=0.6]{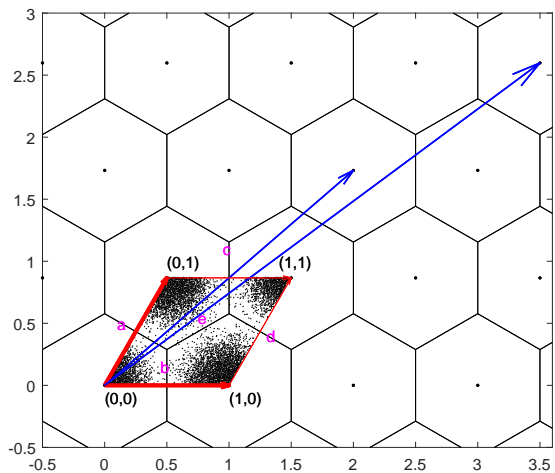}
\caption{Voronoi-reduced basis $\mathcal{B}_1$ for $A_2$ (in red) and a non-reduced basis $\mathcal{B}_2$ (in blue).
$\mathcal{P}(\mathcal{B}_1)$ is partitioned into 4 parts included in the Voronoi cells of its corners.
$\mathcal{P}(\mathcal{B}_2)$ has 10 parts involving 10 Voronoi cells.\label{fig_A2}}
\end{figure}

\begin{definition*}
\label{def_Voronoi-reduced}
Let $\mathcal{B}$ be the $\Z$-basis of a rank-$n$ lattice $\Lambda$ in~$\R^n$.
$\mathcal{B}$ is said Voronoi-reduced if, for any point $y \in \mathcal{P}(\mathcal{B})$,
the closest lattice point $\hx$ to $y$ is one of the $2^n$ corners of $\mathcal{P}(\mathcal{B})$,
i.e. $\hx=\hz G$ where $\hz \in \{0, 1\}^n$.
\end{definition*}

We will use the abbreviation {\em VR basis} to refer to a Voronoi-reduced basis.
Figure~\ref{fig_A2} shows the hexagonal lattice $A_2$, its Voronoi cells, and the fundamental parallelotope of the basis
$\mathcal{B}_1=\{v_1, v_2 \}$, where $v_1=(1, 0)$ corresponds to $z=(1, 0)$
and $v_2=(\frac{1}{2}, \frac{\sqrt{3}}{2})$ corresponds to $z=(0, 1)$.
The basis $\mathcal{B}_1$ is Voronoi-reduced because
\[
\mathcal{P}(\mathcal{B}_1) \subset \mathcal{V}(0) \cup \mathcal{V}(v_1) \cup \mathcal{V}(v_2) \cup\mathcal{V}(v_1+v_2).
\]

Lattice basis reduction is an important field in Number Theory.
We cite three famous types of reduction to get a good basis:
Minkowski-reduced basis, Korkin-Zolotarev-reduced (or Hermite-reduced) basis, and LLL-reduced basis
for Lenstra-Lenstra-Lov\'asz \cite{Micciancio2002}\cite{Cohen1996}.
The reader may notice that a basis with all its vectors on the first lattice shell is Minkowski-reduced.
In general, non-dense lattices do not admit a basis from the first shell.
The basis $\mathcal{B}_1$ in Figure~\ref{fig_A2} is Minkowski-, KZ-, and Voronoi-reduced.
The basis $\{v_1, v_1+v_2\}$ of $A_2$ is not Minkowski, however it is Voronoi-reduced. 
Famous densest lattices listed in \cite{Conway1999}, $D_4$, $E_8$, $\Lambda_{16}$, and $\Lambda_{24}$,
all have a basis from their first shell, however the VR property is not always guaranteed.
Currently, we are completing the study of properties
and existence of a VR basis for a given Euclidean lattice.

The 3-dimensional lattice $A_3=D_3$ and the 4-dimensional Schl\"afli lattice $D_4$ both admit a VR basis.
A VR basis, when it exists, is not necessarily unique. 
The Gram matrices $\Gamma_3$ and $\Gamma_4$ of a VR basis for $A_3$ and $D_4$ are:
\begin{equation}
\Gamma_3=
\left(
\begin{array}{ccc}
2 & 1 & 0 \\
1 & 2 & 1 \\
0 & 1 & 2
\end{array}  
\right),
\end{equation}

\begin{equation}
\Gamma_4=
\left(
\begin{array}{cccc}
  2 & 1 & 1 & 1 \\
  1 & 2 & 1 & 1 \\
  1 & 1 & 2 & 0 \\
  1 & 1 & 0 & 2
\end{array}  
\right).
\end{equation}
Recall that the Gram matrix is $\Gamma=GG^t=(GQ)(GQ)^t$ \cite{Conway1999}, where $Q$ is any $n \times n$ orthogonal matrix.
A lower triangular generator matrix is obtained from the Gram matrix by Cholesky decomposition. 
The Gosset lattice $E_8$ admits a VR basis with respect to $\PPBo$, i.e. there exist isolated points on the facets of $\PPB$
that are not decoded to its corners. A Voronoi-reduced basis of $E_8$ is given by the following Gram matrix:
\begin{equation}
\Gamma_8=
\left(
\begin{array}{cccccccc}
  4  & 2  & 0  & 2  & 2  & 2  & 2  & 2 \\
  2  & 4  & 2  & 0  & 2  & 2  & 2  & 2 \\
  0  & 2  & 4  & 0  & 2  & 2  & 0  & 0 \\
  2  & 0  & 0  & 4  & 2  & 2  & 0  & 0 \\
  2  & 2  & 2  & 2  & 4  & 2  & 2  & 0 \\
  2  & 2  & 2  & 2  & 2  & 4  & 0  & 2 \\
  2  & 2  & 0  & 0  & 2  & 0  & 4  & 0 \\
  2  & 2  & 0  & 0  & 0  & 2  & 0  & 4
\end{array}  
\right).
\end{equation}

For intermediate dimensions, e.g. $n=6$, and for higher dimensions, e.g. $n=16$ and $n=24$,
when the existence of a VR basis cannot be proved via algebraic tools or via a tractable computer search,
the strong constraint defining a VR basis can be relaxed.

\begin{definition*}
Let $\mathcal{C}(\mathcal{B})$ be the set of the $2^n$ corners of $\PPB$. Let $O$ be the subset of $\PPB$
that is covered by Voronoi cells of points not belonging to $\CB$, namely
\begin{equation}
O=\PPB \setminus \left(\PPB \bigcap \left(\bigcup_{x \in \CB} V(x)\right) \right).
\end{equation}
$\mathcal{B}$ is said quasi-Voronoi-reduced if $\vol(O) \ll \det(\Lambda)$. 
\end{definition*}
Let $d^2_{OC}(\mathcal{B})=\min_{x \in O, x' \in \CB} \|x-x'\|^2$ be the minimum squared Euclidean distance between $O$ and $\CB$.
The sphere packing structure associated to $\Lambda$ guarantees that $d^2_{OC} \ge \rho^2$. 
Let $Pe(\mathcal{B})$ be the probability of error (per point) for a neural lattice decoder
built from  a quasi-Voronoi-reduced basis $\mathcal{B}$. Here, we assume that $y_0=x+\eta$
with $\eta_i \sim \mathcal{N}(0,\sigma^2)$, for $i=1, \ldots, n$. The following lemma tells us
that a quasi-Voronoi-reduced basis exhibits quasi-optimal performance on a Gaussian channel
at high signal-to-noise ratio.
In practice, the quasi-optimal performance is also observed at moderate values of signal-to-noise ratio. 
\begin{lemma*}
\label{lem_quasi-Voronoi-reduced}
\begin{align}
  Pe(\mathcal{B}) \le & ~\frac{\tau}{2} \exp(-\frac{\pi e \Delta\gamma}{4})
  + o\left(\exp(-\frac{\pi e \Delta\gamma}{4})\right)\label{equ_Peopt}\\
  & + \frac{\vol(O)}{\det(\Lambda)} \cdot (e\Delta)^{n/2} \cdot
  \exp(-\frac{\pi e \Delta\gamma}{4} \cdot \frac{d^2_{OC}}{\rho^2}),\label{equ_PeO}
\end{align}
for $\Delta$ large enough,
where $\Delta=\frac{\det(\Lambda)^{2/n}}{2\pi e \sigma^2}$ is the distance to Poltyrev limit \cite{Poltyrev1994}, 
$\gamma$ is the Hermite constant of $\Lambda$ \cite{Conway1999},
and $o()$ is the small o Bachmann-Landau notation.
\end{lemma*}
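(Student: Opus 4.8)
The plan is to split the error event into the part on which the neural decoder coincides with the optimal (maximum-likelihood) lattice decoder and the residual part caused by the region $O$, and to bound the two contributions separately. By the translation invariance of $\Lambda$ and the periodicity of the folding in Steps~1--3, I may condition on the transmitted point being the corner $0 \in \CB$, so that $y_0 = \eta$ with i.i.d.\ $\mathcal{N}(0,\sigma^2)$ coordinates; the decoder's decision region for $0$ is then to be compared with the Voronoi cell $\mathcal{V}(0)$. Throughout I will use the two reparametrizations $\rho^2 = \tfrac14\gamma\det(\Lambda)^{2/n}$ (definition of the Hermite invariant, with $d_{min}=2\rho$) and $\det(\Lambda)^{2/n} = 2\pi e\sigma^2\Delta$, which together give $\rho^2/(2\sigma^2) = \pi e\Delta\gamma/4$ and $(2\pi\sigma^2)^{-n/2} = (e\Delta)^{n/2}/\det(\Lambda)$.

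First, on the event $\{y \notin O\}$ the closest lattice point to the folded observation is a corner, so the neural decoder returns exactly the closest lattice point and agrees with the ML decoder. Hence this event contributes at most the ML error probability $P_{ML} = P(\eta \notin \mathcal{V}(0))$. I would bound $P_{ML}$ by the union bound over the relevant Voronoi vectors (equivalently, over the lattice shells), using the Gaussian tail $Q(u) = P(\mathcal{N}(0,1) > u)$: the $\tau$ vectors of the first shell lie at distance $d_{min} = 2\rho$ and each contributes $Q(\rho/\sigma) \le \tfrac12\exp(-\rho^2/2\sigma^2)$, while the second and further shells lie strictly farther and contribute terms of the form $\exp\!\big(-(d^2/d_{min}^2)\,\pi e\Delta\gamma/4\big)$ with $d/d_{min} > 1$, hence $o(\exp(-\pi e\Delta\gamma/4))$ as $\Delta \to \infty$. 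This yields line~(\ref{equ_Peopt}) for $\Delta$ large enough.

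Second, on $\{y \in O\}$ the closest corner differs from the closest lattice point, so the neural decoder may err even when the ML decoder is correct; this residual is at most the Gaussian mass of the noise realizations that fold into $O$. The key geometric step is to show that every such noise vector has norm at least $d_{OC}$. Here one uses that a point of $O$ is separated from all corners by $d_{OC} \ge \rho$, together with the observation that the folding images of $O$ that would otherwise sit near the origin correspond to folded points lying within $\rho$ of a corner, and therefore not in $O$. Granting this, the relevant noise region has volume $\vol(O)$ (it is the image of $O$ under a fundamental-domain identification with $\mathcal{V}(0)$), and the Gaussian density on it is maximized at distance $d_{OC}$, giving the bound $\vol(O)\cdot(2\pi\sigma^2)^{-n/2}\exp(-d_{OC}^2/2\sigma^2)$. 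Inserting $(2\pi\sigma^2)^{-n/2} = (e\Delta)^{n/2}/\det(\Lambda)$ and $d_{OC}^2/(2\sigma^2) = (\pi e\Delta\gamma/4)\,(d_{OC}^2/\rho^2)$ converts this into line~(\ref{equ_PeO}). Adding the two contributions gives the lemma.

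I expect the main obstacle to be exactly this geometric lower bound $\|\eta\| \ge d_{OC}$ for the residual region: one must verify that the folding does not carry a copy of $O$ closer to the transmitted corner than $d_{OC}$, i.e.\ that the nearest-corner separation $d_{OC}$ — and not merely the distance from $O$ to its nearest (possibly non-corner) lattice point — is what controls the residual Gaussian mass. This is precisely what allows the sphere-packing inequality $d_{OC} \ge \rho$ to keep line~(\ref{equ_PeO}) of the same (or faster) exponential order as line~(\ref{equ_Peopt}), so that a small $\vol(O)$ suffices for quasi-optimality at high signal-to-noise ratio. By contrast, the union-bound estimate, the volume/maximum-density bound, and the reparametrization in terms of $\Delta$ and $\gamma$ are routine once this separation is established.
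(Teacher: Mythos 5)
Your decomposition and both estimates are essentially the paper's own: the paper writes $P_e(\mathcal{B}) \le P_e(opt) + P_e(O)$, bounds $P_e(opt)$ by $\frac{1}{2}\Theta_{\Lambda}(q)-\frac{1}{2}$ with $q=\exp(-1/8\sigma^2)$ --- which is precisely your union bound over shells, the $\tau$ first-shell points dominating as $\Delta\to\infty$ --- and bounds the residual by $\vol(O)$ times the peak Gaussian density $(2\pi\sigma^2)^{-n/2}\exp(-d^2_{OC}/2\sigma^2)$, followed by the same reparametrizations in $\Delta$ and $\gamma$. So the route is the same; the only cosmetic difference is theta series versus explicit union bound.

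The substantive point is the step you flag and then defer with ``granting this'': it is a genuine gap in your write-up, and you should know that the paper's proof does not close it either. The paper sets $P_e(O)=\int_O (2\pi\sigma^2)^{-n/2}\exp(-\|x\|^2/2\sigma^2)\,dx$, i.e.\ it integrates the noise density centred at the transmitted corner $0$ over the single translate $O+0G$, where $\|x\|\ge d_{OC}$ does hold because $0\in\CB$. But, as you correctly observe, the event ``the folded point lands in $O$'' is $\eta\in\bigcup_{t}(O+tG)$, and intersected with $\mathcal{V}(0)$ this region is $\bigcup_{w\in\Lambda\setminus\CB}\left((\mathcal{V}(w)\cap O)-w\right)$: its total volume is $\vol(O)$ as needed, but its distance to the origin is $\min_{p\in O} d(p,\Lambda\setminus\CB)$, the distance from $O$ to its nearest \emph{non-corner} lattice point, which is in general smaller than $d_{OC}=d(O,\CB)$ (a point of $O$ lies in the Voronoi cell of some non-corner $w$ and may be much closer to $w$ than to any corner). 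Your folding heuristic (``noise of norm $<\rho$ lands within $\rho$ of a corner, hence outside $O$'') at best delivers $\|\eta\|\ge \rho$, not $\|\eta\|\ge d_{OC}$; that weaker bound already keeps (\ref{equ_PeO}) of the same exponential order as (\ref{equ_Peopt}) and hence preserves the quasi-optimality conclusion, but it does not give the exponent $d^2_{OC}/\rho^2$ as stated. To finish, you must either verify $d(O,\Lambda\setminus\CB)\ge d_{OC}$ for the bases considered, or replace $d^2_{OC}$ in (\ref{equ_PeO}) by $d^2(O,\Lambda)$. Apart from this shared unresolved step, your argument coincides with the paper's.
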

\begin{proof}
For a complete maximum-likelihood decoder (optimal) on the Gaussian channel, the probability of error
per lattice point can be bounded from above by
\begin{equation}
P_e(opt) \le \frac{1}{2}\Theta_{\Lambda}\left(q=\exp(-\frac{1}{8\sigma^2})\right) - \frac{1}{2},
\end{equation}
where $\Theta_{\Lambda}(z)=\sum_{x \in \Lambda} q^{\|x\|^2}$ is the Theta series of $\Lambda$,
see (35) in Section~1.4 of Chapter~3 in \cite{Conway1999}.
It can be easily shown that $\frac{\rho^2}{2\sigma^2}=\frac{\pi e \Delta\gamma}{4}$.
For $\Delta \rightarrow \infty$, the term $\tau q^{4\rho^2}$ dominates the sum in $\Theta_{\Lambda}(z)$, then
\[
P_e(opt) \le ~\frac{\tau}{2} \exp(-\frac{\pi e \Delta\gamma}{4})
  + o\left(\exp(-\frac{\pi e \Delta\gamma}{4})\right).
\]
If $\mathcal{B}$ is Voronoi-reduced and the neural lattice decoder works inside $\PPB$
to find the nearest corner, then the performance is given by $P_e(opt)$.\\
If $\mathcal{B}$ is quasi-Voronoi-reduced and the neural decoder only decides a lattice point from $\CB$,
then an error shall occur each time $y$ falls in $O$. We get
\begin{equation}
P_e(\mathcal{B}) \le P_e(opt) + P_e(O),
\end{equation}
where
\begin{align*}
P_e(O) & =\idotsint_O \frac{1}{\sqrt{2\pi \sigma^2}^n} \exp(-\frac{\|x\|^2}{2\sigma^2}) \,dx_1 \dots dx_n\\
& \le \frac{1}{\sqrt{2\pi \sigma^2}^n} \exp(-\frac{d^2_{OC}}{2\sigma^2})~\vol(O)\\
& = \frac{\vol(O)}{\det(\Lambda)} \cdot (e\Delta)^{n/2} \cdot
  \exp(-\frac{\pi e \Delta\gamma}{4} \cdot \frac{d^2_{OC}}{\rho^2}).
\end{align*}
This completes the proof.
\end{proof}
\noindent
The following Gram matrix corresponds to a quasi-Voronoi-reduced basis of $E_6$,
\begin{equation}
\Gamma_6=\left(
\begin{array}{cccccc}
  3 & \frac{3}{2} & 0 & 0 &\frac{3}{2} &\frac{3}{2} \\
  \frac{3}{2} & 3 & 0 & 0 &\frac{3}{2} &\frac{3}{2} \\  
  0 & 0 & 3 & \frac{3}{2} &\frac{3}{2} &\frac{3}{2} \\
  0 & 0 &\frac{3}{2} & 3 &\frac{3}{2}&\frac{3}{2} \\
  \frac{3}{2} & \frac{3}{2} &\frac{3}{2} &\frac{3}{2} & 3 & \frac{3}{2} \\
  \frac{3}{2} & \frac{3}{2} & \frac{3}{2} &\frac{3}{2} &\frac{3}{2} & 3
\end{array}
\right),
\end{equation}
with $\frac{d^2_{OC}}{\rho^2}=1.60$ (2dB of gain) and $\frac{\vol(O)}{\det(\Lambda)}=2.47\times 10^{-3}$.
The ratio of (\ref{equ_PeO}) by (\ref{equ_Peopt}) is about $10^{-4}$ at $\Delta=1=0dB$ (on top of Poltyrev limit!)
then vanishes further for increasing $\Delta$. 
Obviously, the quasi-VR property is good enough to allow the application
of a neural lattice decoder working with $\CB$ such as the Hyperplane
Logical Decoder presented in the next section.
If a complete decoder is required, e.g. in specific applications such as lattice shaping and cryptography,
the user should let the neural lattice decoder manage extra points outside $\CB$. For example,
the disconnected region $O$ for $E_6$ defined by $\Gamma_6$ includes extra points
where $z_i \in \{ -1, 0, 1, +2\}$ instead of $\{ 0, 1\}$ as for $\CB$. 

\section{A Hyperplane Logical Decoder}\label{secMERDEHLD}
In this section, we introduce a neural lattice decoder to find the closest point
for small dimensions without learning. This decoder, referred to as the Hyperplane Logical Decoder (HLD),
is Maximum-Likelihood (it exactly solves the CVP) for lattices admitting a VR basis.
It can also be applied to lattices admitting only a quasi-VR basis,
to yield near-Maximum-Likelihood performance in presence of additive white Gaussian noise. 

The HLD shall operate in $\mathcal{P}=\mathcal{P}(\mathcal{B})$ as for Step~2 in the decoding steps
listed in the previous section.
$\mathcal{B}$ is assumed to be Voronoi-reduced. The exact CVP, or Maximum-Likelihood Decoding (MLD),
is solved by comparing the position of $y$ to all Voronoi facets partitioning $\mathcal{P}$. 
This can be expressed in the form of a Boolean equation,
where the binary (Boolean) variables are the positions with respect to the facets (on one side or another).
Since $\mathcal{V}(x)=\mathcal{V}(0)+x$, orthogonal vectors to all facets partitioning $\mathcal{P}$
are determined from the facets of $\mathcal{V}(0)$.

\begin{example*}
On Figure~\ref{fig_A2}, let $\hz=(\hz_1, \hz_2)$. The first component $\hz_1$ is $1$ (true) if
$y$ is on the right of hyperplane $c$, or on the right of $b$ and below $e$ simultaneously. 
As slight abuse of notation, we let $a,b,c,d$ and $e$ be Boolean variables, the state of which depends on the location of $y$ with respect to the corresponding hyperplane. 
We get the Boolean equation $\hz_1=c+b \cdot e$, where $+$ is a logical OR and $\cdot$ stands for a logical AND.
Similarly, $\hz_2=d+a \cdot \overline{e}$, where $\overline{e}$ is the Boolean complement of $e$.
\end{example*}

\begin{figure}[!ht]
\centering
\includegraphics[scale=0.835]{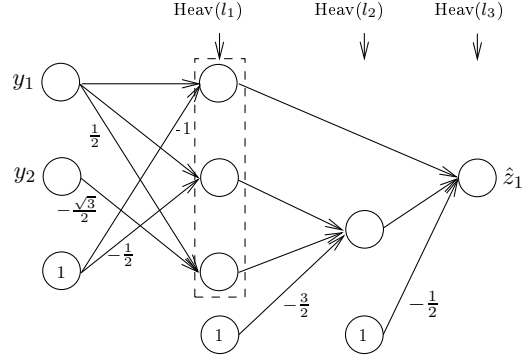}
\caption{Neural network performing HLD decoding on the first symbol $z_{1}$ of a point in $\mathcal{P}$ for the lattice
$A_2$. Unlabeled edges have weight 1. The bias nodes are required to perform AND and OR.
Heav($\cdot$) stands for Heaviside($\cdot$).}
\label{fig_A2_net}
\end{figure}

For $\Lambda \subset \R^n$ of rank $n$, to find the Boolean equation of a coordinate $\hz_k$,
select the $2^{n-1}$ corners of $\mathcal{P}$ where $z_{k}=1$ and perform the two following steps:
\begin{itemize}
\item For each corner, move in the direction of a relevant vector by half its norm + $\epsilon$ (e.g. by $\rho + \epsilon$ if the relevant vector is a lattice point from the first shell).
Three possible situations are encountered. (i) The resulting point is outside $\mathcal{P}$.
Hence, there is no decision boundary in this direction. (ii) If not outside $\mathcal{P}$,
find the closest lattice point $x'=z'G$ by sphere decoding \cite{Viterbo1999}\cite{Agrell2002}.
If $z'_{k}=1$ then, again, there exists no decision boundary in this direction.
(iii) $z'_{k}=0$, a decision boundary orthogonal to this direction does exist. 
\item The Boolean equation of $\hz_k$ contains a term with a Boolean AND of all decision boundaries found
at the same corner. The equation is the Boolean OR of $2^{n-1}$ terms coming from all selected corners.
\end{itemize}

For clarity reasons, we omitted technical details in the above steps that involve facets of $\mathcal{P}$ with a tie.
In practice, the constructed Boolean equation with its $2^{n-1}$ terms is significantly reduced
into a simpler equation, mainly as a result of identical terms.
If $\Lambda$ does not admit a VR or quasi-VR basis, the HLD should be constructed from more lattice shells and some coordinates of $\hz$ are not binary anymore.

Once the Boolean equations and the decision boundaries are known,
the HLD can be executed in three steps (a)-(c). By abuse of terminology,
the inner product of two points in $\R^n$ refers to the inner product
between the two vectors defined by these points: 
(a) Compute the inner product of $y$ with the lattice points orthogonal to the decision boundaries.
(b) Apply the Heaviside function on the resulting quantities to get its relative position
under the form of Boolean variables.
(c) Compute the logical equations associated to each coordinate.

Since (a) and (b) are simply inner products followed by activation functions,
the natural way to represent these steps is to use perceptrons \cite{Goodfellow2016},
where the edges are labeled with the decision hyperplane parameters,
i.e. the perceptron weights define the vector orthogonal to the decision hyperplane.
(a) and (b) form the first layer of a neural network.
The second layer implements the logical AND and the third layer the logical OR.
As a result, the HLD can be thought of as a neural network with two hidden layers.
Figure~\ref{fig_A2_net} illustrates the topology of the neural network obtained
when applying the HLD to the lattice $A_2$. 

Figure~\ref{fig_HLD_lat_perf} shows the point error-rate performance of MLD and HLD for the Schl\"afli lattice $D_4$
and the Gosset lattice $E_8$. All decoders perform exact CVP. However, HLD was running on a distinct machine
with a different pseudo-random sequence of lattice points and noise samples. This explains the slight difference
between HLD and MLD in Figure~\ref{fig_HLD_lat_perf} due to the Monte Carlo method.

\begin{figure}[!t]
\centering
\includegraphics[scale=0.335,angle=-90]{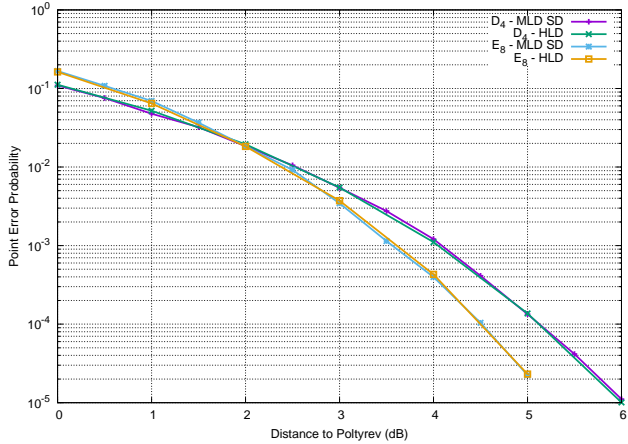}
\caption{HLD and MLD for lattices $D_4$ and $E_8$.}
\label{fig_HLD_lat_perf}
\end{figure}

\section{Learning to decode}
We discuss here two neural lattice decoders denoted by NLD2 and NLD3. The first neural lattice decoder (NLD) is the HLD
of the previous section. Both models NLD2 and NLD3 need to acquire their weights via learning.

NLD2 is a standard fully-connected feed-forward sigmoid network \cite{Goodfellow2016}
without any constraint on its architecture. To be competitive, the number of parameters of NLD2 should grow slower than $2^{n}$.
The discussion of the sample complexity \cite{Anthony1999} of NLD2 is omitted due to lack of space. 
The performance of NLD2 is shown on Figure~\ref{fig_NLD2} for $E_8$ ($n=8$) and the MIMO lattice $T55$ ($n=16$) taken
from \cite{Samuel2017}. In all NLD2 models, the size of first hidden layer is taken to be of the same order of magnitude as
the lattice kissing number ($\tau(E_8)=240$ and $\tau(T55)=30$). For $E_8$, the NLD2 has three hidden layers each with 200 neurons. 
Its performance is very close to MLD but this model has $W=83200$ parameters and is too complex relative to HLD for $E_8$.
The ratio $\frac{\text{log}_{2}(W)}{n}=2.0$ (supra linear).
The NLD2 neural network is not suited to decoding dense lattices.
For $T55$, the NLD2 in case~1 has three hidden layers with 50-100-100 neurons respectively. In case~2, it is also 
made up of three hidden layers with 30-50-50 neurons respectively and $W=6280$ parameters.
The ratio $\frac{\text{log}_{2}(W)}{n}=0.78$ (sub-linear).
From its complexity and its illustrated performance,
we state that NLD2 is a competitive decoding algorithm for non-dense lattices. 

\begin{figure}[t!]
\centering
\includegraphics[scale=0.335,angle=-90]{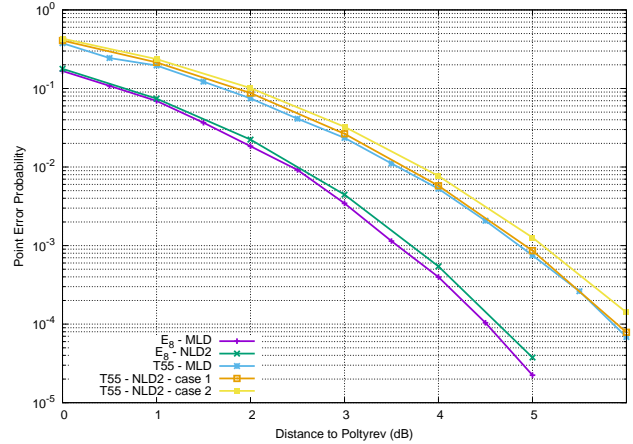}
\caption{Decoders with learning, NLD2, without constraint.}
\label{fig_NLD2}
\end{figure}

Now, we introduce NLD3, a learning model with L1 regularization to simplify its structure. 
NLD3 shall have a structure constraint: its first hidden layer is fixed and taken from the HLD model.
Indeed, the authors in \cite{Nachmani2016} used the neural network representation of the Tanner graph for BCH codes
to come up with an architecture exploiting \textit{a priori} information on the code structure.
The neural network being sub-optimal, they improved its performance via learning.
We embrace a similar paradigm: use \textit{a priori} information on the structure of the lattice
to build the architecture. Nevertheless, in our case, the HLD neural network is already optimal (it cannot be improved).
However, we let an HLD-initialized NLD3 model learn to simplify its structure while limiting the performance degradation. We provide an example with $D_4$.
For the first coordinate $z_{1}$, the HLD has the following Boolean equation: 
\vspace{-2.1mm}
\[
z_{1} =  u_{1}+ u_{2} \cdot u_{3} \cdot u_{4} \cdot u_{5} \cdot u_{6} + u_{4} \cdot u_{7} \cdot u_{8} 
          + u_{4} \cdot u_{7} \cdot u_{9}   +  u_{4}\cdot u_{10}. 
\vspace{-2.1mm}
\]
The performance of NLD3 with $D_4$ is given in Figure~\ref{fig_NLD3}.
Cases~1 and 2 correspond to the results of L1 regularization with respectively Heaviside and sigmoid activation functions. For cases~3 and 4, in order to further simplify the model structure, three edges were pruned with two different strategies between the first and the second hidden layers.
Of course, learning does not lead to the same model weights in these cases.
The NLD3 model in case~1 simplifies the Boolean equation of $z_1$ to two terms only (to be compared to the five AND conditions above):
the second hidden layer shrank from five to two neurons.
Its performance is still Maximum-Likelihood like the HLD.
Case~2 is also quasi-optimal, the slight loss is due to an imperfect training.
Case~3 generates an error-floor while case~4 exhibits a great robustness.
A similar behavior of NLD3 was observed when utilized on other point lattices.

\begin{figure}
\centering
\includegraphics[scale=0.33,angle=-90]{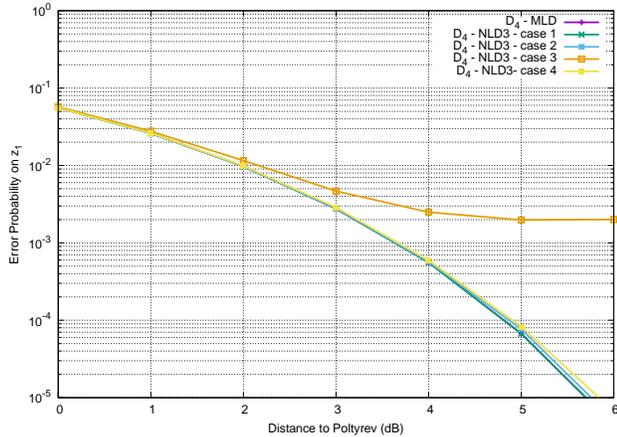}
\caption{Neural lattice decoder with learning, NLD3, with constraints and L1 regularization, applied to the lattice $D_4$.}
\label{fig_NLD3}
\end{figure}

\newpage

\end{document}